\newcommand{\beq}{\begin{displaymath}}
\newcommand{\eeq}{\end{displaymath}}
\newcommand{\beqn}{\begin{equation}}
\newcommand{\eeqn}{\end{equation}}
\newcommand{\beqa}{\begin{eqnarray*}}
\newcommand{\eeqa}{\end{eqnarray*}}
\newcommand{\beqna}{\begin{eqnarray}}
\newcommand{\eeqna}{\end{eqnarray}}
\newcommand{\re}[1]{~(\ref{#1})}
\newcommand{\eq}[1]{~(\ref{#1})}
\newcommand{\N}{\mathbb{N}}
\newcommand{\R}{\mathbb{R}}
\newcommand{\C}{\mathbb{C}}
\newcommand{\ra}{\rightarrow}
\newtheorem{prop}{Proposition}[section]
\newtheorem{thm}[prop]{Theorem}
\newtheorem{lem}[prop]{Lemma}
\theoremstyle{definition} 
\newtheorem{rem}[prop]{Remark}
\newtheorem{qstn}[prop]{Question}
\begin{document}

\title{Transition probabilities and measurement\\ statistics of postselected ensembles}
\author{
	\firstname{Tobias}
	\surname{Fritz}}
\affiliation{Max Planck Institute for Mathematics}
\email{fritz@mpim-bonn.mpg.de}

\begin{abstract}
It is well-known that a quantum measurement can enhance the transition probability between two quantum states. Such a measurement operates after preparation of the initial state and before postselecting for the final state. Here we analyze this kind of scenario in detail and determine which probability distributions on a finite number of outcomes can occur for an intermediate measurement with postselection, for given values of the following two quantities: (i) the transition probability without measurement, (ii) the transition probability with measurement. This is done for both the cases of projective measurements and of generalized measurements. Among other constraints, this quantifies a trade-off between high randomness in a projective measurement and high measurement-modified transition probability. An intermediate projective measurement can enhance a transition probability such that the failure probability decreases by a factor of up to $2$, but not by more.
\end{abstract}

\maketitle

\section{Introduction}

It is a puzzling property of quantum theory that a measurement on a physical system can change the state of that system in a drastic way. A well-known demonstration of this can be made with polarizers (figure~\ref{polarizers}): upon shining a beam of light onto two orthogonally aligned polarizers, no light at all passes through both of them. However after placing a third polarizer in between the two, such that this new one is not aligned with either of the other two, it is suddenly possible for some light to pass through the whole setup. Hence the middle polarizer, functioning as a projective measurement, has increased the transition probability from zero to a positive value!

Using the simple geometry of a two-state quantum system e.g. in the Bloch sphere, it is not hard to see that the maximal measurement-modified transition probability in this polarizer scenario can be at most $\tfrac{1}{2}$. But what about other cases like $d$-dimensional Hilbert spaces of states---what is the maximal modified transition probability then? Or what if the unmodified transition probability does not vanish? And how do the original and the modified transition probability relate to the statistics of the measurement? These questions are what we are concerned with here---mostly for the case of projective measurements, but also for generalized measurements.

The answers to these questions are statements saying that certain things are possible in quantum theory, while other things are not. Hence these answers might in principle be of interest for further high-precision experimental tests of the quantum formalism. For example, it seems conceivable that models with dynamical wavefunction collapse make different predictions than orthodox quantum theory does.

\begin{figure}
\centering{\includegraphics[width=10cm]{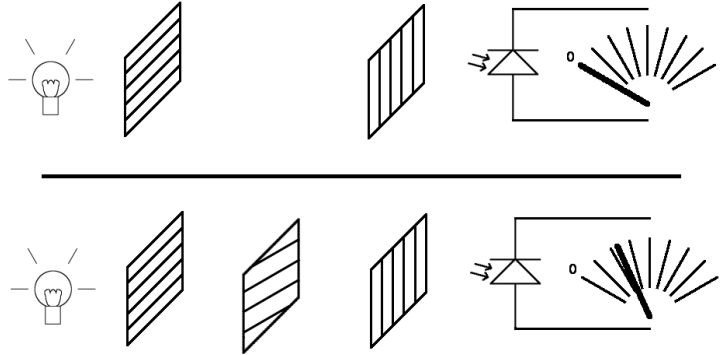}}
\caption{Enhancing transition probabilities by measurement. We regard the first polarizer as state preparation, the last polarizer as postselection, and the intermediate polarizer as a measurement.}
\label{polarizers}
\end{figure}

The natural framework for our considerations is the two-state vector formalism of Aharonov and Vaidman~\cite{Aha}. There, quantum theory becomes time-symmetric by considering two state vectors for a quantum system: an ordinary state vector $|\psi\rangle$ evolving forward in time, and an additional state vector $|\phi\rangle$ evolving backward in time. While $|\psi\rangle$ is to be interpreted as an initial state, $|\phi\rangle$ should be thought of as a target final state: after all other measurements have been done, the experimenter measures the projection operator $|\phi\rangle\langle\phi|$ and records the results of the experiment in case of a positive result, while discarding the whole run in case of a negative result. This procedure is known as \emph{postselection} with respect to $|\phi\rangle$. The polarizer example from above fits into this framework: the first polarizer can be regarded as preparation of the initial state $|\psi\rangle$, while the final polarizer conducts a postselection with respect to a final state $|\phi\rangle$. The measurement statistics obtained by such a procedure are the statistics of the postselected ensemble. Postselected ensembles can show very counterintuitive behavior: examples are the phenomenon that the so-called ``weak value'' of an observable can be bigger than the observable's largest eigenvalue~\cite{AAV}, or the three-boxes thought experiment which displays a high degree of contextuality~\cite{Aha}. Another unexpected property of postselected ensembles has then been found in~\cite{Fri} (see section~\ref{prelim}), and this is the line of investigation to be continued here. The present article should be readable without knowledge of any of the references mentioned.

\paragraph{Synopsis.} Section~\ref{prelim} states the problem studied here and recalls some results from a previous paper~\cite{Fri} about dichotomic measurements with postselection. The main result of the present work then follows in section~\ref{main}, a characterization of all triples $(T,S,P(\cdot))$ allowed in quantum theory, where $T$ is the transition probability without measurement, $S$ is the transition probability with measurement, and $P(\cdot)$ is the statistics of the intermediate $n$-outcome projective (resp. generalized) measurement. After that, section~\ref{discuss} discusses some particular special cases of this result and determines to what extent transition probabilities between quantum states can be enhanced by a projective measurement. The latter two sections frequently refer back to the mathematical appendix~\ref{math}. Finally, section~\ref{conclude} presents a brief conclusion.

\paragraph{Notation.} While Dirac notation appears throughout the main text, but not in the mathematical appendix. Sometimes we use expressions like $\langle\phi|A|\psi\rangle$ also when $A$ is not hermitian. In this case, we stipulate that $A$ acts to the right on the vector $|\psi\rangle$.

\begin{acknowledgments} I would like to thank Andreas Winter for an invitation to visit the Centre for Quantum Technologies, where most of this research has been conducted. I am indebted to Reinhard Werner for stimulating questions during a talk. Furthermore, this work would not have been possible without the excellent research conditions within the IMPRS graduate program at the Max Planck Institute and the invaluable advice provided by my supervisor Matilde Marcolli. Finally, an anonymous referee has kindly provided many highly relevant comments on an earlier version of this manuscript and spotted a gap in the previous proof of proposition~\ref{AStrunc}.
\end{acknowledgments}

\section{Statement of the problem and previous results}
\label{prelim}

\paragraph{Outcome probabilities for ensembles with postselection.} As has also been outlined in the introduction, we consider a quantum system subject to the following kind of procedure:
\begin{enumerate}
\item prepation of some initial state $|\psi\rangle$, 
\item application of a projective (or generalized) measurement with $n$ outcomes,
\item postselection\footnote{For more background on quantum mechanics with postselection and the counterintuitive properties of postselected ensembles, we again refer to~\cite{Aha},~\cite{AAV}.} with respect to some final state $|\phi\rangle$.
\end{enumerate}

We assume that these three consecutive steps happen almost instantaneously, so that the dynamics of the system can be neglected. This is not an essential restriction since we can always take $|\psi\rangle$ to be the actual initial state modified by time evolution until the time of measurement, and similar for $|\phi\rangle$. Also it is no loss of generality to take both $|\psi\rangle$ and $|\phi\rangle$ as pure states, since a mixed state can always be purified by adding an ancilla to the system with which it is entangled (see e.g.~\cite[2.5]{NC}; for the purification of both $|\psi\rangle$ and $|\phi\rangle$, we might have to add two ancillas).

Concerning the intermediate measurement, we will consider the cases of projective measurement and of generalized measurement separately. 

We now calculate the outcome probabilties of the intermediate measurement on the postselected ensemble. The measurement is taken to be defined in terms of Kraus operators $V_k$, $k\in\{1,\ldots,n\}$, with $\sum_k V_k^\dagger V_k=\mathbbm{1}$. It will be assumed for simplicity that the measurement is fine-grained, i.e. that to each outcome $k$ corresponds exactly one Kraus operator $V_k$; this is enough for our main result~\ref{mainthm}, and it should be clear how to extend the following considerations to the general case. With these assumptions, the probability of getting the outcome $k$ in conjunction with successful postselection on the post-measurement state $\frac{V_k|\psi\rangle}{\sqrt{\langle\psi|V_k^\dagger V_k|\psi\rangle}}$ is given by the product of the two respective probabilities as
\beq
\langle\psi|V_k^\dagger V_k|\psi\rangle\cdot\frac{|\langle\phi|V_k|\psi\rangle|^2}{\langle\psi|V_k^\dagger V_k|\psi\rangle}=|\langle\phi|V_k|\psi\rangle|^2.
\eeq
So the cancellation between the normalization of the post-measurement state and the outcome probability gives a surprisingly simple formula for the probability of getting the outcome $k$ in the postselected ensemble:
\beqn
\label{probsgen}
P(k)=\frac{|\langle\phi|V_k|\psi\rangle|^2}{\sum_j|\langle\phi|V_j|\psi\rangle|^2}.
\eeqn
Here, the normalization factor
\beq
\label{success}
S\equiv\sum_j|\langle\phi|V_j|\psi\rangle|^2
\eeq
is the \emph{success probability} of the postselection, i.e. the probability that the final measurement of $|\phi\rangle\langle\phi|$ will give a positive result. We may also regard $S$ as the measurement-modified transition probability. However in order to have a clearer terminology, we will reserve the term ``transition probability'' for $T=|\langle\phi|\psi\rangle|^2$, and refer to $S$ as the ``success probability''.

Note that the formalism is time-reversal invariant in the sense that the roles of the initial state and final state can be interchanged without changing the outcome probabilities or the success probability.

For the case of projective measurements, the Kraus operators should be taken to be a complete set of projection operators,
\beq
V_k=\Pi_k\quad\textrm{ with }\quad\Pi_k^\dagger=\Pi_k,\qquad\Pi_k^2=\Pi_k,\qquad\sum_k\Pi_k=\mathbbm{1}.
\eeq
With this replacement we obtain for the outcome probabilities the Aharonov-Bergmann-Lebowiz formula (eq. (9) in~\cite{Aha}, see also~\cite{ABL}),
\beqn
\label{probs}
P(k)=\frac{|\langle\phi|\Pi_k|\psi\rangle|^2}{\sum_j|\langle\phi|\Pi_j|\psi\rangle|^2},
\eeqn
with the success probability
\beq
S=\sum_j|\langle\phi|\Pi_j|\psi\rangle|^2
\eeq
as normalization factor.

\paragraph{Introducing the problem.} The problem to be solved is the following:

\begin{qstn}
\label{mainqstn}
Given the transition probability $T=|\langle\phi|\psi\rangle|^2\in[0,1]$, which probability distributions $P(\cdot)$ on $\{1,\ldots,n\}$ are outcome distributions of a projective (resp. generalized) measurement for which values of the success probability $S\in(0,1]$?
\end{qstn}

We will only consider the case that the success probability $S$ is strictly positive; for vanishing $S$, the postselected ensemble is empty, and hence the probability distribution $P(\cdot)$ is not defined.

The operational significance of question~\ref{mainqstn} is as follows. All quantities $T$, $S$ and $P(\cdot)$ are in principle experimentally measurable. We imagine that some experiment has provided us with concrete values for these quantities. Then the task is to find a quantum-mechanical model reproducing these particular values, without specifying the Hilbert space dimension in advance, and assuming that the measurement is projective (resp. generalized). Our main result~\ref{mainthm} then tells us directly whether this is possible or not. Now as already mentioned in the introduction, this could be useful for actual high-precision experimental tests of the quantum formalism, and help to distinguish e.g. models of dynamical wavefunction collapse~\cite{Pearle} from orthodox quantum theory, where wavefunction collapse happens instantaneously. 

\paragraph{Previous results.} The surprising results of~\cite{Fri} have been a strong motivation for the present work. There, the case $T=0$ and $n=2$ has been treated in section 2, and it was found that the only possibility is given by $P(1)=P(2)=\tfrac{1}{2}$, independently of $S$. This is actually easiest to see on the level of amplitudes, where it follows from
\beq
0=\langle\phi|\psi\rangle=\langle\phi|\Pi_1|\psi\rangle+\langle\phi|\Pi_2|\psi\rangle,
\eeq
so that the two probabilities for measuring $1$ or $2$ are given by, respectively,
\beq
P(1)=\frac{|\langle\phi|\Pi_1|\psi\rangle|^2}{|\langle\phi|\Pi_1|\psi\rangle|^2+|\langle\phi|\Pi_2|\psi\rangle|^2}=\frac{1}{2},\qquad P(2)=\frac{|\langle\phi|\Pi_2|\psi\rangle|^2}{|\langle\phi|\Pi_1|\psi\rangle|^2+|\langle\phi|\Pi_2|\psi\rangle|^2}=\frac{1}{2}.
\eeq
Intuitively, this means that a dichotomic projective measurement with postselection which is orthogonal to the inital state is guaranteed to be a perfectly unbiased random number generator.

\section{Main results}
\label{main}

Using the elementary mathematical results listed in appendix~\ref{math}, we are now ready to answer question~\ref{mainqstn} in generality.

\begin{thm}
\label{mainthm}
\begin{enumerate}
\item A given probability distribution $P(\cdot)$ with given $T\in[0,1]$ and $S\in(0,1]$ can occur via a 
projective measurement if and only if all the inequalities
\beqn
\label{Pineqs}
\sqrt{P(k)}\leq\sqrt{\frac{T}{S}}+\sum_{j\neq k}\sqrt{P(j)}\quad\forall k,\qquad \sqrt{\frac{T}{S}}\leq\sum_k\sqrt{P(k)}\leq\frac{1}{\sqrt{S}}
\eeqn
hold.
\item With a generalized measurement, any combination of values for $P(\cdot)$, $T$ and $S$ can occur.
\end{enumerate}
\end{thm}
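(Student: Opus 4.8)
The plan is to reformulate everything in terms of the amplitudes $a_k\equiv\langle\phi|\Pi_k|\psi\rangle\in\C$. By the Aharonov--Bergmann--Lebowitz formula~(\ref{probs}), realizing a triple $(T,S,P(\cdot))$ with a projective measurement is the same as exhibiting a projective resolution of the identity $\{\Pi_k\}$ and unit vectors $|\psi\rangle,|\phi\rangle$ with $|a_k|=\sqrt S\,\sqrt{P(k)}$ for every $k$: then automatically $\sum_k a_k=\langle\phi|\psi\rangle$, so that $\bigl|\sum_k a_k\bigr|=\sqrt T$ (and we may take $\langle\phi|\psi\rangle=\sqrt T$ after an irrelevant global phase), while $\sum_k|a_k|^2=S$. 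Thus part~(a) splits into two independent questions: (i)~for which $T$ can complex numbers of moduli $\sqrt S\sqrt{P(k)}$ have a sum of modulus $\sqrt T$, and (ii)~which tuples $(a_k)$ arising in~(i) are of the form $\langle\phi|\Pi_k|\psi\rangle$ for some projective resolution of the identity and unit vectors. For~(i) I would invoke the elementary fact from Appendix~\ref{math} on the possible moduli of a sum of complex numbers with prescribed moduli (a polygon inequality): applied with side lengths $\sqrt S\sqrt{P(k)}$ and $\sqrt T$ it produces \emph{exactly} the first inequality in~(\ref{Pineqs}) together with the lower bound $\sqrt{T/S}\le\sum_k\sqrt{P(k)}$.

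For the necessity direction of~(a) it then remains only to derive the upper bound $\sum_k\sqrt{P(k)}\le 1/\sqrt S$, which is the one genuinely quantum-mechanical constraint and does not follow from the triangle inequalities. It comes from Cauchy--Schwarz applied twice, using $\Pi_k^\dagger\Pi_k=\Pi_k$ and $\sum_k\Pi_k=\mathbbm{1}$:
\[
\sqrt S\sum_k\sqrt{P(k)}=\sum_k|\langle\phi|\Pi_k|\psi\rangle|\le\sum_k\|\Pi_k|\phi\rangle\|\,\|\Pi_k|\psi\rangle\|\le 1,
\]
the last step being the Cauchy--Schwarz inequality in $\C^n$ together with $\sum_k\|\Pi_k|\phi\rangle\|^2=\sum_k\|\Pi_k|\psi\rangle\|^2=1$. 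The remaining inequalities in~(\ref{Pineqs}) then hold by~(i).

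For the sufficiency direction of~(a), the first step runs the fact from Appendix~\ref{math} backwards: assuming all of~(\ref{Pineqs}), I choose phases $e^{i\theta_k}$ so that $a_k\equiv\sqrt S\sqrt{P(k)}\,e^{i\theta_k}$ satisfy $\sum_k a_k=\sqrt T$. The second step answers~(ii) by an explicit block construction: take $\mathcal{H}=\bigoplus_{k=1}^n\C^2$, let $\Pi_k$ be the orthogonal projection onto the $k$-th summand, and choose $c_k\ge 0$ with $c_k^2\ge|a_k|$ and $\sum_k c_k^2=1$ --- this is possible because the upper bound gives $\sum_k|a_k|=\sqrt S\sum_k\sqrt{P(k)}\le 1$. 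In the $k$-th summand pick $|\psi_k\rangle,|\phi_k\rangle$ of norm $c_k$ with $\langle\phi_k|\psi_k\rangle=a_k$, which exist in $\C^2$ since $|a_k|$ does not exceed the product $c_k\cdot c_k$ of their norms. Then $|\psi\rangle=\bigoplus_k|\psi_k\rangle$ and $|\phi\rangle=\bigoplus_k|\phi_k\rangle$ are unit vectors with $\langle\phi|\Pi_k|\psi\rangle=a_k$ and $\langle\phi|\psi\rangle=\sqrt T$, so the transition probability is $T$, the success probability is $\sum_k|a_k|^2=S$, and the outcome distribution is $|a_k|^2/S=P(k)$, as desired. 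I expect the main nuisance in this part to be purely bookkeeping: treating the degenerate cases ($T=0$, some $P(k)=0$, or $c_k=0$) carefully, and checking that the statement borrowed from Appendix~\ref{math} really covers the boundary value $\max\{0,\,2\max_k|a_k|-\sum_k|a_k|\}$ of the achievable modulus.

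Part~(b) should be considerably easier, since $\sum_k V_k^\dagger V_k=\mathbbm{1}$ imposes no constraint on $\sum_k\langle\phi|V_k|\psi\rangle$, so that $T$ decouples from $S$ and $P(\cdot)$. I would take $|\psi\rangle,|\phi\rangle\in\C^2$ with $\langle\phi|\psi\rangle=\sqrt T$, choose a unitary $U$ on $\C^2$ with $\langle\phi|U|\psi\rangle=\sqrt S$ (which exists because $\sqrt S\le 1$), and set $V_k=\sqrt{P(k)}\,U$. Then $\sum_k V_k^\dagger V_k=\bigl(\sum_k P(k)\bigr)\,U^\dagger U=\mathbbm{1}$, the success probability is $\sum_k P(k)\,|\langle\phi|U|\psi\rangle|^2=S$, the outcome distribution is $P(k)\,S/S=P(k)$, and $|\langle\phi|\psi\rangle|^2=T$. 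Operationally this generalized ``measurement'' merely samples the outcome from $P(\cdot)$ without disturbing or querying the system, which is precisely why no constraints survive.
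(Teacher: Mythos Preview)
Your argument is correct and, for part~(a), follows the same skeleton as the paper: polygon inequalities (Lemma~\ref{proph}) applied to the amplitudes $a_k=\langle\phi|\Pi_k|\psi\rangle$ together with $-\langle\phi|\psi\rangle$ give all but the last inequality in~(\ref{Pineqs}), and the double Cauchy--Schwarz bound yields $\sum_k|a_k|\le 1$. The only substantive difference is in the sufficiency construction: the paper invokes Lemma~\ref{ampls} to realize any tuple $(a_k)$ with $\sum_k|a_k|\le 1$ using \emph{rank-one} projectors $\Pi_k=|k\rangle\langle k|$ on $\C^n$, whereas your block construction on $\bigoplus_k\C^2\cong\C^{2n}$ trades dimensional economy for elementarity --- you avoid Lemma~\ref{ampls} entirely at the cost of doubling the Hilbert space dimension, so the paper's subsequent remark that dimension $n$ always suffices would not follow from your version.

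For part~(b) your construction is genuinely different from, and cleaner than, the paper's. The paper builds $V_k=U_k\Pi_k$ from a projective measurement with the correct statistics on $|\psi\rangle$ and outcome-dependent unitaries steering the post-measurement state to a fixed $|\phi'\rangle$; this needs Hilbert space dimension at least $n$. Your choice $V_k=\sqrt{P(k)}\,U$ on $\C^2$ works for any $n$ and makes the decoupling of $T$ from $(S,P(\cdot))$ transparent. One small inaccuracy in your closing sentence: the measurement does act on the system (it applies $U$); what is true is that the \emph{outcome} is sampled independently of the state.
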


\begin{proof} We start with the proof in the projective measurement case. The main idea here is to use the completeness relation $\sum_k \Pi_k=\mathbbm{1}$ in order to obtain an identity for amplitudes
\beq
\langle\phi|\psi\rangle=\sum_k\langle\phi|\Pi_k|\psi\rangle
\eeq
and then translate this into conditions on the probabilities\re{probs}. To this end, we can apply lemma~\ref{proph} to
\beq
z_k=\langle\phi|\Pi_k|\psi\rangle,\:\: k=1,\ldots,n,\qquad z_{n+1}=-\langle\phi|\psi\rangle.
\eeq
For then upon setting $x_k\equiv\sqrt{P(k)S}=|\langle\phi|\Pi_k|\psi\rangle|$ for $k=1,\ldots,n$, and defining $x_{n+1}=\sqrt{T}$, it follows that the left-most inequalities of\re{Pineqs} are necessary, as well as the first inequality of the second formula. 

The remaining inequality follows from two applications of the Cauchy-Schwarz-inequality as follows:
\beq
\sum_k|z_k|=\sum_k |\langle\phi|\Pi_k|\psi\rangle|\leq\sum_k\sqrt{\langle\phi|\Pi_k|\phi\rangle}\cdot\sqrt{\langle\psi|\Pi_k|\psi\rangle}\leq\sqrt{\sum_k\langle\phi|\Pi_k|\phi\rangle}\cdot\sqrt{\sum_k\langle\psi|\Pi_k|\psi\rangle}=1,
\eeq
as was to be shown.

To see that the inequalities\re{Pineqs} taken together are also sufficient for the existence of a quantum-mechanical model, we again set $x_k$ to be given by the square roots of the unnormalized probabilities as $x_k\equiv\sqrt{P(k)S}$ for $k=1,\ldots,n$, and again define $x_{n+1}=\sqrt{T}$. Then once more by~\ref{proph}, some compatible $z_k$'s with $\sum_{k=1}^{n+1}z_k=0$ can now assumed to be given, and they also satisfy $\sum_{k=1}^n|z_k|=\sum_{k=1}^nx_n\leq 1$ by the assumption\re{Pineqs}. Now one can use lemma~\ref{ampls} to obtain the states on $\C^n$ which are given by
\beq
|\psi\rangle=\sum_{k=1}^n\psi_k|k\rangle,\qquad |\phi\rangle=\sum_{k=1}^n\phi_k|k\rangle
\eeq
in conjunction with the projection operators $\Pi_k=|k\rangle\langle k|$ for $k=1,\ldots,n$. Then $\sqrt{P(k)S}=|\langle\phi|\Pi_k|\psi\rangle|$ and $T=|\langle\phi|\psi\rangle|^2$ both hold by construction. The requirement $S=\sum_k|\langle\phi|\Pi_k|\psi\rangle|^2$ is automatic by normalization of the probability distribution $P(\cdot)$. This ends the proof in the projective measurement case.

In the generalized measurement case, we will construct $|\psi\rangle$, $|\phi\rangle$ and $V_k$ which reproduce the given data. We first choose any unit vectors $|\psi\rangle$ and $|\phi\rangle$ satisfying $|\langle\phi|\psi\rangle|^2=T$. Now in Hilbert space of dimension at least $n$, it is possible to find a complete set of mutually orthogonal projectors $\Pi_k$ such that $P(k)=\langle\psi|\Pi_k|\psi\rangle$; if one would measure these on the ensemble defined by the initial state $|\psi\rangle$ without postselection, one would obtain the given distribution $P(\cdot)$. Now fix some unit vector $|\phi'\rangle$ with $|\langle\phi'|\phi\rangle|^2=S$. Then for those $k$ with $P(k)>0$, there exists a unitary $U_k$ which maps the unit vector $P(k)^{-1/2}\Pi_k|\psi\rangle$ to $|\phi'\rangle$. We take $V_k\equiv U_k\Pi_k$, while setting $V_k\equiv 0$ for those $k$ with $P(k)=0$. What we have thus constructed is a generalized measurement in which the post-measurement state is always $|\phi'\rangle$; this guarantees that the measurement statistics on the initial state $|\psi\rangle$ and the postselection are probabilistically independent. Hence by construction, the desired statistics $P(\cdot)$, $T$ and $S$ have been reproduced.
\end{proof}

It is possible to rewrite the inequalities\re{Pineqs} in a more convenient form. Since the left-most inequality holds for all $k$ if and only if it holds for that $k$ for which $P(k)$ is largest, it is enough to require
\beq
2\sqrt{\max_k P(k)}\leq\sqrt{\frac{T}{S}}+\sum_k\sqrt{P(k)}
\eeq
In terms of the diversity indices\eq{divindex},\eq{maxindex}
\beqn
\label{moms}
D_\infty\equiv\frac{1}{\max_k P(k)},\qquad D_{1/2}\equiv\left(\sum_k\sqrt{P(k)}\right)^2
\eeqn
we can see that the inequalities\re{Pineqs} are in fact equivalent to
\beqn
\label{chain}
\boxed{\frac{2}{\sqrt{D_\infty}}-\sqrt{D_{1/2}}\leq\sqrt{\frac{T}{S}}\leq\sqrt{D_{1/2}}\leq\frac{1}{\sqrt{S}}}
\eeqn
so that the dependence on the distribution $P(\cdot)$ is only through the dependence on the quantities $D_\infty$ and $D_{1/2}$. By\eq{holderD}, the allowed interval for $\sqrt{T/S}$ is always non-empty.

\begin{rem}
\begin{enumerate}
\item The proof of the theorem shows that it is sufficient to employ Hilbert spaces of dimension at most $n$. For projective measurements, this is clearly best possible. For generalized measurements however, the number of outcomes is not related to the Hilbert space dimension, and so it might be interesting to study how much the existence of a quantum-mechanical model depends on Hilbert space dimension. Since the proof above still involves many arbitrary choices, it seems conceivable that one can cover a sizeable part of the space of triplets $(P(\cdot),T,S)$ e.g. by qubit models.
\item The diversity indices $D_\infty$ and $D_{1/2}$ are simply the exponentials of the min-entropy and the R\'enyi $\tfrac{1}{2}$-entropy, respectively:
\beq
H_\infty=\log D_\infty,\qquad H_{1/2}=\log D_{1/2}.
\eeq 
\item The right-most inequality in\eq{chain} states that
\beq
S\leq \frac{1}{D_{1/2}}.
\eeq
Intuitively, this means that high randomness in the measurement implies a low success probability. So in order to achieve a high success probability, one needs to choose a projective measurement with not too much randomness on the postselected ensemble.

\item A very nice example of how to control transition amplitudes by measurements is the Aharonov-Vardi effect~\cite{AV}, a variant of the quantum Zeno effect. The observation is that any given quantum dynamics $|\psi(t)\rangle$ can be approximately simulated by starting with the initial state $|\psi(t_0)\rangle$ and conducting projective measurements $|\psi(t_n)\rangle\langle\psi(t_n)|$ at the times $t_n\equiv t_0+n\cdot\delta t$, with $n\in\N$. Aharonov and Vardi~\cite{AV} have shown in particular that for $\delta t\ra 0$, the probability of obtaining any target state $|\psi(t_f)\rangle$ at any final time $t_f$ approaches unity. Since such a sequence of projective measurements can also be seen as a single generalized measurement, this illustrates part (b) of the theorem.
\end{enumerate}
\end{rem}

\section{Discussion}
\label{discuss}

Let us now look at some specific cases of theorem~\ref{mainthm}(a). So in this section, ``measurement'' always means ``projective measurement''.

\paragraph{Case $T=0$ with $S$ arbitrary.} This is the case that has been studied in~\cite{Fri} for $n=2$. As long as we allow the success probability $S$ to be arbitrarily small, all that remains are the inequalities
\beqn
\label{triangle}
\sqrt{P(k)}\leq\sum_{j\neq k}\sqrt{P(j)}\quad\forall k
\eeqn
For $n=2$, this reads $\sqrt{P(1)}\leq\sqrt{P(2)}$ and $\sqrt{P(2)}\leq\sqrt{P(1)}$, implying that $P(1)=P(2)=\tfrac{1}{2}$. Hence a dichotomic measurement with postselection which is orthogonal to the initial state is guaranteed to be a perfectly unbiased random number generator (see section~\ref{prelim}). 
The $n=3$ case is illustrated in figure~\ref{fig1}; one obtains a circular disk within the probability simplex. This can be shown from\eq{triangle} by squaring the inequalities, rearranging, and then squaring again while taking care of the signs. This eventually leads to the quadratic inequalities
\beq
\left(P(1)-P(2)-P(3)\right)^2\leq 2P(2)P(3)\quad +\quad\textrm{cyclic permutations}
\eeq
for the circular shape of the quantum region in figure~\ref{fig1}. Also, just as it should due to the result for the $n=2$ case, the $n=3$ region intersects with any side of the triangle in exactly the middle of that side. So whenever the final state is orthogonal to the initial state, any intermediate projective measurement with three outcomes needs to show statistics lying in this disk.

For arbitrary $n\geq 2$, one can at least say that the $\sqrt{P(k)}$ always lie in a certain subset of $\R^n$ which is the convex cone defined by the inequalities\eq{triangle}. Since these $n$ inequalities are linearly independent in $\R^n$, for dimensional reasons this convex cone is a simplex, i.e. the conical hull of $n$ linearly independent extreme rays. One can calculate the $m$th extreme ray by requiring all inequalities except for the $m$th one to be saturated. Solving the ensuing system of linear equations shows that the $m$th extreme ray $y^m$ has the coordinates
\beq
y^m_j=1+(2-n)\delta_{jm}
\eeq
Hence for any $P(\cdot)$ satisfying\eq{triangle}, one can find non-negative real numbers $\lambda_m$ such that
\beq
\sqrt{P(k)}=\sum_m\lambda_my^m_k.
\eeq

\begin{figure}
\psset{unit=130pt}
\centering{
\begin{pspicture}(0,0)(1,1)
\psline(0,0)(.5,.866)
\psline(.5,.866)(1,0)
\psline(0,0)(1,0)
\pscircle[fillstyle=solid,fillcolor=lightgray](.5,.289){0.289}
\rput(-.05,0){$1$}
\rput(1.05,0){$2$}
\rput(.5,.916){$3$}
\end{pspicture}}
\caption{The quantum-mechanical region within the probability simplex for three measurement outcomes, $T=0$ (orthogonal postselection), and arbitrary success probability $S$. This is a ternary plot, i.e. each vertex stands for a definite outcome, and each point inside the triangle represents a probability distribution over the vertices.}
\label{fig1}
\end{figure}
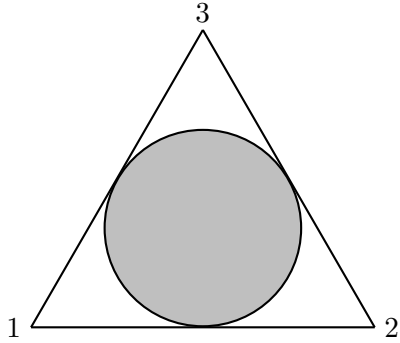

\paragraph{Case $S$ fixed, $P(\cdot)$ fixed.} The first two inequalities of\eq{chain} define an interval of possible values for the transition probability $T$. This can be interpreted as follows: by knowing the behavior of the system with measurement, it is possible to predict something about how the system would behave without measurement.

\paragraph{Case $T>0$ fixed, $P(\cdot)$ fixed, $S$ arbitrary.} Here, it is possible for any $P(\cdot)$ to find some appropriately small success probability $S$ such that all inequalities in\re{chain} hold (e.g. $S=\frac{T}{D_{1/2}}$), so no constraints abound. This is one reason why it is important to always consider $S$ as an additional parameter.

\paragraph{Case $n=2$ with $T$ and $S$ unspecified.} Here, the two probability values $P(1)$ and $P(2)$ determine each other uniquely, so let us write $P(1)=p$ and $P(2)=1-p$. Then the inequalities are
\beqn
\label{binary}
\big|\sqrt{p}-\sqrt{1-p}\big|\leq\sqrt{\frac{T}{S}}\leq \sqrt{p}+\sqrt{1-p}\leq\frac{1}{\sqrt{S}}
\eeqn
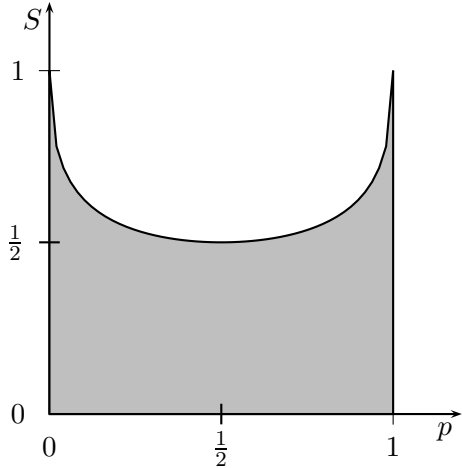
\begin{figure}
\psset{unit=130pt}
\centering{\begin{pspicture}(-.23,-.23)(1.2,1.2)
\rput(-.05,1.15){$S$}
\rput(1.15,-.05){$p$}
\psaxes{->}(0,0)(0,0)(1.2,1.2)
\psset{fillstyle=solid,fillcolor=lightgray}
\pscustom{
\psplot{0}{1}{1 1 2 x 1 x sub mul sqrt mul add div}
\psline(1,1)(1,0)
\psline(1,0)(0,0)
\psline(0,0)(0,1)}
\psline(.5,-.03)(.5,.03)
\rput(.5,-.1){$\tfrac{1}{2}$}
\psline(-.03,.5)(.03,.5)
\rput(-.1,.5){$\tfrac{1}{2}$}
\end{pspicture}}
\caption{For $n=2$ (dichotomic measurement), the possible quantum-mechanical success probabilities $S$ as a function of the outcome probability $p$. High randomness in the measurement decreases the maximal probability of successful postselection, i.e. the maximal measurement-modified transition probability.}
\label{fign2}
\end{figure}
The projection of this into the $p$-$S$-plane, where only the last inequality is relevant, is shown in figure~\ref{fign2}. For fixed $S$, some sections of the quantum region are graphed in figure~\ref{fign2A}. The first two inequalities of\re{binary} define the upper and lower boundary curves in these figures, while the third inequality leads to vertical cuts whenever $S>\tfrac{1}{2}$.

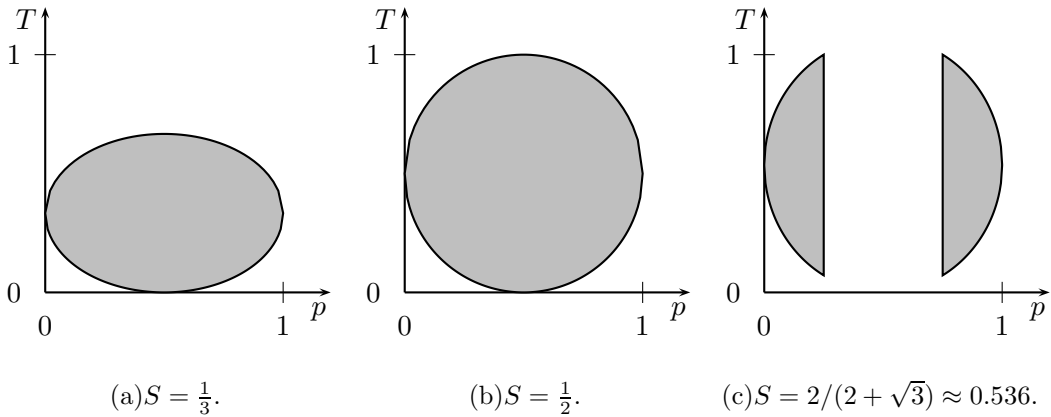
\begin{figure}
\psset{unit=090pt}
\centering{\subfigure[$S=\tfrac{1}{3}$.]{\begin{pspicture}(-.23,-.23)(1.2,1.2)
\rput(-.08,1.15){$T$}
\rput(1.15,-.08){$p$}
\psaxes{->}(0,0)(0,0)(1.2,1.2)
\psset{fillstyle=solid,fillcolor=lightgray}
\pscustom{
\psplot{.5}{1}{1 3 div sqrt x sqrt 1 x sub sqrt sub mul dup mul}
\psplot{1}{0}{1 3 div sqrt 1 x sub sqrt x sqrt add mul dup mul}
\psplot{0}{.5}{1 3 div sqrt 1 x sub sqrt x sqrt sub mul dup mul}}
\end{pspicture}
}
\subfigure[$S=\tfrac{1}{2}$.]{\begin{pspicture}(-.23,-.23)(1.2,1.2)
\rput(-.08,1.15){$T$}
\rput(1.15,-.08){$p$}
\psaxes{->}(0,0)(0,0)(1.2,1.2)
\psset{fillstyle=solid,fillcolor=lightgray}
\pscustom{
\psplot{.5}{1}{1 2 div sqrt x sqrt 1 x sub sqrt sub mul dup mul}
\psplot{1}{0}{1 2 div sqrt 1 x sub sqrt x sqrt add mul dup mul}
\psplot{0}{.5}{1 2 div sqrt 1 x sub sqrt x sqrt sub mul dup mul}}
\end{pspicture}
}
\subfigure[$S=2/(2+\sqrt{3})\approx 0.536$.]{\begin{pspicture}(-.23,-.23)(1.2,1.2)
\rput(-.08,1.15){$T$}
\rput(1.15,-.08){$p$}
\psaxes{->}(0,0)(0,0)(1.2,1.2)
\psset{fillstyle=solid,fillcolor=lightgray}
\pscustom{
\psplot{1}{.75}{.5359 sqrt 1 x sub sqrt x sqrt add mul dup mul}
\psline(.75,1)(.75,.072)
\psplot{.75}{1}{.5359 sqrt x sqrt 1 x sub sqrt sub mul dup mul}}
\pscustom{
\psplot{0}{.25}{.5359 sqrt 1 x sub sqrt x sqrt add mul dup mul}
\psline(.25,1)(.25,.072)
\psplot{.25}{0}{.5359 sqrt 1 x sub sqrt x sqrt sub mul dup mul}}
\end{pspicture}}}
\caption{Again $n=2$ (dichotomic measurement). These plots show the quantum-mechanical region for $(p,T)$ for some values of $S$. The vertical cuts for $S>\tfrac{1}{2}$ appear due to figure~\ref{fign2}. For $S\ra 1$, these cuts rapidly approach the $p=0$ and $p=1$ axes. One possible interpretation is that knowing the system behavior with measurement (i.e. $S$ and $p$) lets us say something about system behavior without measurement (i.e. $T$).} 
\label{fign2A}
\end{figure}

\paragraph{The $T$-$S$-region.} How does the transition probability relate in general to the probability of successful postselection? To study this, it is best to consider the inequalities in the form\re{chain}. Figure~\ref{figAS} shows an illustration of the following proposition.

\begin{prop}
\label{AStrunc}
For a given number of outcomes $n$, some success probability $S$ can appear in quantum theory together with some transition probability $T$ if and only if
\beqn
\label{AS}
\boxed{\frac{T}{n}\leq S\leq\frac{T+1}{2}}
\eeqn
\end{prop}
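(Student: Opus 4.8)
The plan is to read the answer straight off Theorem~\ref{mainthm}(a): a pair $(T,S)$ is realised by \emph{some} $n$-outcome projective measurement exactly when there is a probability distribution $P(\cdot)$ on $\{1,\dots,n\}$ satisfying the inequalities~\eqref{Pineqs}, equivalently the chain~\eqref{chain}. So the whole task is to decide for which $(T,S)$ such a $P(\cdot)$ can be found, and to verify that this set is exactly the box~\eqref{AS}.

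\emph{Necessity.} The lower bound $S\ge T/n$ falls out immediately: \eqref{Pineqs} gives $\sqrt{T/S}\le\sum_k\sqrt{P(k)}\le\sqrt n$ by Cauchy--Schwarz, so $T/S\le n$. For $S\le(T+1)/2$ I would abbreviate $\Sigma=\sum_k\sqrt{P(k)}$, $M=\max_k\sqrt{P(k)}$ and $\tau=\sqrt{T/S}$, and read off from~\eqref{Pineqs} the two facts $2M\le\tau+\Sigma$ and $\Sigma\le1/\sqrt S$. It suffices to prove $\tau^2+\Sigma^2\ge2$, for then $(T+1)/S=\tau^2+1/S\ge\tau^2+\Sigma^2\ge2$. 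Letting $k^*$ realise the maximal weight, one has $\Sigma-M=\sum_{k\neq k^*}\sqrt{P(k)}\ge\sqrt{\sum_{k\neq k^*}P(k)}=\sqrt{1-M^2}$, as well as $1=\sum_kP(k)\le M\Sigma$. Now split on whether $\Sigma\ge2M$: if it is, then $\Sigma^2/2\ge M\Sigma\ge1$, so $\Sigma^2\ge2$ and we are done; if not, then $\tau\ge2M-\Sigma>0$, so $\tau^2+\Sigma^2\ge(2M-\Sigma)^2+\Sigma^2$, and the right side is increasing in $\Sigma$ on $\Sigma\ge M$ and hence at least its value at $\Sigma=M+\sqrt{1-M^2}$, which is exactly $2$. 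I expect this last estimate to be the delicate part of the whole argument: the naive route --- bounding via $D_\infty\le D_{1/2}$ from~\eqref{holderD} only --- gives merely $S\le(1+\sqrt T)/2$, so one genuinely has to carry the maximal weight $M$ through the computation to land on the sharp constant.

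\emph{Sufficiency.} Assume~\eqref{AS} and distinguish two cases. If $T/S\le2$, I would take a two-outcome distribution $P=(p,1-p,0,\dots,0)$ with $p\ge\tfrac12$; then~\eqref{Pineqs} collapses to~\eqref{binary}, and squaring throughout and putting $q=2\sqrt{p(1-p)}$ (which runs over all of $[0,1]$ as $p$ runs over $[\tfrac12,1]$) recasts the requirement as the non-emptiness of $\big[\tfrac{|T-S|}{S},\tfrac{1-S}{S}\big]\cap[0,1]$. Since in this case $|T-S|\le S$ automatically, the only surviving condition is $|T-S|\le1-S$, i.e.\ $S\le(T+1)/2$, which holds by~\eqref{AS}. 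If instead $T/S>2$, then $n\ge T/S>2$ forces $n\ge3$, and I would simply pick \emph{any} distribution $P$ on $\{1,\dots,n\}$ with $D_{1/2}(P)=d:=\min(1/S,n)$; such a $P$ exists because $D_{1/2}$ is continuous on the simplex and attains every value in $[1,n]$, and here $d\in[T/S,n]$ thanks to $T\le1$. The outer two inequalities in the chain~\eqref{chain} then hold by construction, while the remaining one, $\tfrac{2}{\sqrt{D_\infty}}-\sqrt{D_{1/2}}\le\sqrt{T/S}$, is automatic: $D_\infty\ge1$ and $D_{1/2}=d>2$ push its left-hand side below $2-\sqrt2<\sqrt2<\sqrt{T/S}$. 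As the first case realises every point of~\eqref{AS} with $S\ge T/2$ and the second every point with $S<T/2$, this completes the proof. The only real idea in the sufficiency direction is the observation that once $T/S>2$ the min-entropy inequality in~\eqref{chain} becomes vacuous, which decouples the two constraints and makes the construction immediate.
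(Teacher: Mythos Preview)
Your proof is correct and follows essentially the same route as the paper's: the necessity of the upper bound rests on the same key estimate $(\Sigma-M)^2\ge 1-M^2$ (which is exactly Lemma~\ref{strange}(a) in the notation $M^2=1/D_\infty$, $\Sigma^2=D_{1/2}$), and sufficiency is obtained by the same two constructions---a two-outcome distribution for the ``large $S$'' regime and a distribution with prescribed $D_{1/2}$ for the ``small $S$'' regime. The only cosmetic differences are that you split sufficiency at $T/S=2$ rather than at $T/S=1$, and that your treatment of the case $\Sigma\ge 2M$ (via $\Sigma^2\ge 2M\Sigma\ge 2$) is a one-liner where the paper sub-splits on $D_\infty\lessgtr 2$.
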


\begin{proof}
Again it is first shown that these inequalities are necessary. Since $D_{1/2}\leq n$, the second inequality in\re{chain} implies that
\beq
T\leq nS.
\eeq
For proving the second inequality of\eq{AS}, we distinguish two cases. If, firstly, the left-most term of\eq{chain} is non-negative, we can square the left-most inequality of\eq{chain} and use it as follows:
\beq
\frac{T}{S}+\frac{1}{S}\stackrel{(\ref{chain})}{\geq}\frac{4}{D_\infty}+2D_{1/2}-4\sqrt{\frac{D_{1/2}}{D_\infty}}=2\left[\frac{1}{D_\infty}+\left(\sqrt{D_{1/2}}-\frac{1}{\sqrt{D_\infty}}\right)^2\right],
\eeq
so that the desired result follows from lemma~\ref{strange}(a). The second case is that the left-most term of\eq{chain} is negative, which means that $D_{1/2}\geq\frac{4}{D_\infty}$. If $D_\infty\leq 2$, we have therefore $D_{1/2}\geq 2$, so that $S\leq\tfrac{1}{2}$ by $S\leq\frac{1}{D_{1/2}}$. Finally if $D_\infty\geq 2$, then\eq{holderD} also shows that $D_{1/2}\geq 2$, giving the same conclusion $S\leq\tfrac{1}{2}$. In all cases, the second inequality of\eq{AS} has therefore been verified.

For checking sufficiency of\re{AS}, consider first the case that $\frac{T}{n}\leq S\leq T$. Then by lemma~\ref{strange}(b), it follows that the first inequality of\re{chain} holds automatically. The possible values for $D_{1/2}$ are given by the closed interval $[1,n]$. Hence it is possible to find some value for $D_{1/2}$ in this interval which also satisfies\re{chain} whenever $\frac{1}{\sqrt{S}}\geq 1$, which holds trivially, and $\sqrt{\frac{T}{S}}\leq\sqrt{n}$, which is true by assumption. This ends the proof in this case.

It remains to prove sufficiency when $T\leq S\leq\frac{T+1}{2}$. Here, it is in fact enough to consider probability distributions $P(\cdot)$ supported on two elements, which brings us effectively down to the dichotomic case $n=2$ from equation\re{binary}. By $\sqrt{\frac{T}{S}}\leq 1$, the middle inequality of\eq{binary} is automatic, so one only needs to take care of the remaining two. These in turn can be written as
\beq
1-2\sqrt{p(1-p)}\leq\frac{T}{S},\qquad 1+2\sqrt{p(1-p)}\leq\frac{1}{S}
\eeq
Upon choosing $p$ such that the second inequality is saturated, one finds that the first inequality is satisfied as long as $\frac{T+1}{S}\geq 2$.
\end{proof}

So this result gives clear bounds on how much a measurement can enhance or reduce a transition probability. It has been found that a measurement can reduce a transition probability by a factor which is given by the number of outcomes of the measurement. This becomes intuitive when one thinks of the measurement---with outcomes discarded---as a decoherence process which can drive the system's state towards a totally mixed state or highly mixed state.

The situation for enhancing transition probabilities by measurement is very different. We can rewrite the first inequality of\eq{AS} more conveniently in terms of the \emph{failure probabilities} $1-T$ and $1-S$, where it reads
\beq
\boxed{1-S\geq\frac{1-T}{2}}
\eeq
Hence, a measurement can lower the probability that a desired state transition fails by a factor of up to $2$, but not by more. The proof above has shown that this enhancement can already be achieved by a two-outcome measurement. This is again intuitive in terms of the decoherence due to measurement: for creating a successful transition, it would be useless to try to measure a projection operator with support outside of the linear span $\mathrm{lin}\left\{|\psi\rangle,|\phi\rangle\right\}$. Therefore, a transition-enhancing measurement should have non-vanishing probability on exactly two outcomes.

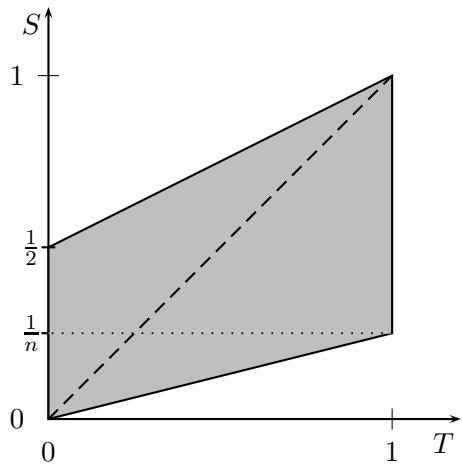
\begin{figure}
\psset{unit=130pt}
\centering{\begin{pspicture}(-.23,-.23)(1.2,1.2)
\rput(-.05,1.15){$S$}
\rput(1.15,-.07){$T$}
\psaxes{->}(0,0)(0,0)(1.2,1.2)
\psset{fillstyle=solid,fillcolor=lightgray}
\pscustom{
\psplot{0}{1}{1 x add 2 div}
\psline(1,1)(1,.25)
\psplot{1}{0}{x 4 div}
\psline(0,0)(0,.5)}
\psline[linestyle=dashed](0,0)(1,1)
\psline[linestyle=dotted](-.02,.25)(1,.25)
\psline(-.02,.25)(0,.25)
\rput(-.05,.25){$\frac{1}{n}$}
\psline(-.02,.5)(.02,.5)
\rput(-.05,.5){$\frac{1}{2}$}
\end{pspicture}}
\caption{The quantum region of transition probabilities: $T$ is the transition probability without measurement, while $S$ is the transition probability with $n$-ary projective measurement. All points above the dashed diagonal $S=T$ represent a measurement-enhanced transition probability.}
\label{figAS}
\end{figure}

\section{Conclusion}
\label{conclude}
It is a well-known phenomenon that measurements influence transition probabilities between quantum states. In this article, we have conducted a systematic study of this phenomenon and determined how it relates to the outcome distribution of the intermediate measurement on the corresponding postselected ensemble. It has been found that a given probability distribution for a projective measurement can appear in conjunction with a given transition probability and a given success probability of the postselection if and only if certain inequalities hold. These inequalities depend on the probability distribution only through its min-entropy and its R\'enyi $\tfrac{1}{2}$-entropy. Furthermore, no conditions at all abound if the measurement is allowed to be any generalized quantum measurement.

As a consequence of these results, it was possible to bound the enhancement of transition probabilities by projective measurements. The maximal enhancement can be achieved with two-outcome measurements and is such that the failure probability decreases by a factor of $2$.

\appendix
\section{Mathematical appendix}
\label{math}

Here we collect various elementary mathematical facts which are referenced from the main text.

It is known (see e.g.~\cite{Pin}) that a finite sequence of non-negative real numbers, $x_1,\ldots,x_n$ is the sequence of edge lengths of a polygon in the Euclidean plane if and only if the inequalities
\beqn
\label{ineq1}
x_k\leq\sum_{j\neq k} x_j
\eeqn
hold. Since they generalize the triangle inequality, these inequalities are known as \emph{polygon inequalities}; on the other hand, the triangle inequality immediately implies that the polygon inequalities are necessary for the existence of a polygon with these edge lengths. See e.g.~\cite{HSS} for another occurence of the polygon inequalities in quantum information theory.

This geometrical statement directly implies the following:

\begin{lem}
\label{proph}
Given non-negative real numbers $x_1,\ldots,x_n$, there exist complex numbers $z_1,\ldots,z_n$ with
\beq
|z_k|=x_k,\qquad \sum_k z_k=0
\eeq
if and only if the inequalities
\beqn
\label{ineqh}
x_k\leq\sum_{j\neq k} x_j
\eeqn
hold.
\end{lem}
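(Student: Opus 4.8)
The plan is to recognize that this lemma is nothing but a complex-analytic restatement of the polygon characterization\eq{ineq1} quoted immediately above it. Identify $\C$ with the Euclidean plane $\R^2$, so that a complex number $z_k$ with $|z_k|=x_k$ is exactly a plane vector of length $x_k$. Given such vectors $z_1,\ldots,z_n$, form the polygonal path with vertices $p_0=0$, $p_1=z_1$, $p_2=z_1+z_2$, \ldots, $p_n=z_1+\cdots+z_n$; its successive edges have lengths $x_1,\ldots,x_n$, and the path closes up (that is, $p_n=p_0$) precisely when $\sum_k z_k=0$. Conversely, from any closed polygonal path with edge lengths $x_1,\ldots,x_n$ one reads off its edge vectors $z_1,\ldots,z_n\in\C$, which then satisfy $|z_k|=x_k$ and $\sum_k z_k=0$ (translating so that one vertex sits at the origin). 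Hence the $z_k$ required by the lemma exist if and only if $x_1,\ldots,x_n$ are the edge lengths of a (possibly degenerate) closed polygon in the plane, and by the cited result this holds if and only if the polygon inequalities\eq{ineqh} are satisfied.

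The only points that merit a word of care are degenerate configurations. Edges of length $0$ are harmless, since $z_k=0$ is permitted and contributes nothing to either the modulus constraints or the sum. Collinear or self-intersecting polygons are likewise allowed, because the lemma asks only for the existence of the closed sequence of edge vectors, with no embeddedness, simplicity, or convexity requirement. The quoted polygon characterization is stated in exactly this degenerate-friendly generality, so no supplementary argument is needed; in particular the extreme cases $n=1$ (where\eq{ineqh} forces $x_1=0$, realized by $z_1=0$) and $n=2$ (where it forces $x_1=x_2$, realized by $z_1=x_1$, $z_2=-x_1$) are subsumed without separate treatment.

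Since the reduction is a direct translation of one piece of language into another, there is essentially no hard step. The only thing to get right is the bookkeeping observation that ``closed polygon with prescribed edge lengths'' and ``complex numbers with prescribed moduli summing to zero'' are literally the same datum, together with the (already available) fact that the polygon theorem applies in the form allowing degeneracies.
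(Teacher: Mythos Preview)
Your proposal is correct and follows exactly the approach the paper takes: the paper simply observes that the lemma is an immediate restatement of the polygon characterization\eq{ineq1} under the identification of $\C$ with the Euclidean plane, and you have spelled out precisely this identification (with some additional care about degenerate cases that the paper leaves implicit).
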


\begin{lem}
\label{ampls}
For $n\geq 2$ and any $z\in\C^n$, there exist $\psi,\phi\in\C^n$ with
\beq
||\psi||_2=1=||\phi||_2,\qquad\quad\overline{\psi}_k\phi_k=z_k\quad\forall k=1,\ldots,n
\eeq
if and only if the inequality
\beqn
\label{l1norm}
\sum_k|z_k|\leq 1
\eeqn
holds.
\end{lem}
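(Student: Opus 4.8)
The plan is to prove both directions by reducing everything to the moduli $|z_k|$ and exploiting the freedom in choosing phases. For necessity, suppose $\psi,\phi\in\C^n$ with unit $\ell^2$-norm satisfy $\overline{\psi}_k\phi_k=z_k$ for all $k$. Then $|z_k|=|\psi_k||\phi_k|$, and summing over $k$ and applying the Cauchy–Schwarz inequality to the two vectors $(|\psi_1|,\ldots,|\psi_n|)$ and $(|\phi_1|,\ldots,|\phi_n|)$ in $\R^n$ gives
\beq
\sum_k|z_k|=\sum_k|\psi_k||\phi_k|\leq\left(\sum_k|\psi_k|^2\right)^{1/2}\left(\sum_k|\phi_k|^2\right)^{1/2}=1,
\eeq
which is exactly\re{l1norm}. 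So this direction is immediate.

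For sufficiency, assume $\sum_k|z_k|\leq 1$. First reduce to the real case: if we can find non-negative reals $a_k,b_k$ with $\sum a_k^2=\sum b_k^2=1$ and $a_kb_k=|z_k|$, then setting $\psi_k\equiv a_k$ and $\phi_k\equiv (z_k/|z_k|)\,b_k$ (and $\phi_k\equiv b_k$ when $z_k=0$) yields $\overline{\psi}_k\phi_k=z_k$ with both vectors of unit norm. So it suffices to realize a given sequence of non-negative reals $r_k\equiv|z_k|$ with $\sum_k r_k\leq 1$ as $r_k=a_kb_k$ for unit vectors $a,b\in\R^n_{\geq0}$. The natural construction is to put all the ``slack'' into a single extra coordinate: introduce $r_0\equiv 1-\sum_k r_k\geq 0$, so that $r_0+\sum_k r_k=1$, and then try to write each $r_k$ ($k=0,\ldots,n$) as $a_k b_k$. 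A clean choice is to let $a$ and $b$ differ only in coordinates $0$ and $\ell$, where $\ell$ is an index achieving $r_\ell=\max_k r_k$: set $a_k=b_k=\sqrt{r_k}$ for $k\neq 0,\ell$; this already contributes $\sum_{k\neq0,\ell}r_k$ to both norms-squared, leaving total mass $r_0+r_\ell$ to distribute between the two remaining coordinates of each vector while still needing $a_0b_0=r_0$ and $a_\ell b_\ell=r_\ell$. Choosing $a_0=b_\ell=\sqrt{r_0/2+s}$ and $a_\ell=b_0=\sqrt{r_0/2-s+r_\ell}$ for a suitable parameter $s$, or more simply exploiting that one only needs a $2\times2$ configuration, one checks that the constraints $a_0b_0=r_0$, $a_\ell b_\ell=r_\ell$, $a_0^2+a_\ell^2=b_0^2+b_\ell^2=r_0+r_\ell$ are solvable precisely because of the two-term polygon (triangle) inequality among $r_0$, $r_\ell$ and $r_0+r_\ell$, which holds trivially. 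An even slicker route avoiding case analysis: since $\sum_k r_k\leq1$, apply lemma~\ref{proph} with the $n+1$ numbers $\sqrt{r_1},\ldots,\sqrt{r_n}$ together with... no — better to apply Cauchy–Schwarz in reverse via the rank-one structure, realizing $(r_k)$ as the entrywise product by interpolating: $a_k=r_k^{t}$, $b_k=r_k^{1-t}$ is wrong in general, so the two-coordinate redistribution above is the honest argument.

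The main obstacle is this sufficiency construction: one must actually exhibit unit vectors, and the only nontrivial point is checking that the mass $r_0+r_\ell$ can be split between two coordinates of $a$ and two of $b$ so that the cross-products come out to $r_0$ and $r_\ell$; this amounts to solving $xy=r_0$, $x'y'=r_\ell$ with $x^2+x'^2=y^2+y'^2=r_0+r_\ell$, which has a solution (e.g. $x=y'=\sqrt{r_0}\cdot u$, etc.) iff $\sqrt{r_0}+\sqrt{r_\ell}\le\sqrt{r_0+r_\ell}\cdot(\text{something})$ — the relevant feasibility inequality is automatic here since we have complete freedom in $r_0\ge0$. I would verify this $2\times2$ lemma directly by an explicit parametrization rather than invoke anything heavier. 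Once that is in hand, assembling $\psi$ and $\phi$ and restoring the phases of the $z_k$ finishes the proof, and the hypothesis $n\geq2$ is used only to guarantee that the two special coordinates $0$ and $\ell$ (after the reduction) are available, i.e. that there is room for the construction.
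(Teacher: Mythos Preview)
Your necessity argument is fine and matches the paper's. The sufficiency argument, however, has a real gap: by introducing $r_0\equiv 1-\sum_k r_k$ as an \emph{additional} coordinate and then ``writing each $r_k$ ($k=0,\ldots,n$) as $a_kb_k$'', you are constructing unit vectors $a,b$ in $\R^{n+1}$, not in $\R^n$. Dropping the $0$th entry afterwards destroys the unit norm (you lose mass $a_0^2$ and $b_0^2$), so the construction does not yield the required $\psi,\phi\in\C^n$. Your closing remark that $n\geq 2$ ``guarantees that the two special coordinates $0$ and $\ell$ are available'' confirms that you are treating $0$ as a genuine coordinate, which it is not.

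The repair is close to what you wrote but uses two \emph{existing} coordinates rather than a phantom one: since $n\geq 2$, pick any two indices (say $1$ and $2$), set $a_k=b_k=\sqrt{r_k}$ for $k\geq 3$, and then solve the two-dimensional problem $a_1b_1=r_1$, $a_2b_2=r_2$, $a_1^2+a_2^2=b_1^2+b_2^2=r_1+r_2+r_0$. That $2\times 2$ system is exactly what the paper handles explicitly: parametrizing $a=(\sqrt{M}\cos\alpha,\sqrt{M}\sin\alpha)$, $b=(\sqrt{M}\cos\beta,\sqrt{M}\sin\beta)$ with $M=r_0+r_1+r_2$, one needs $\cos(\alpha-\beta)=(r_1+r_2)/M\leq 1$ and $|\cos(\alpha+\beta)|=|r_1-r_2|/M\leq 1$, both automatic. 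Your own $2\times 2$ discussion never actually reaches this verification --- the trial parametrization with $s$ is left unchecked, and the invocation of a ``two-term polygon inequality among $r_0$, $r_\ell$ and $r_0+r_\ell$'' does not correspond to the actual solvability condition. The paper organizes the same content as an induction on $n$: the base case $n=2$ is the explicit trigonometric computation above, and the induction step peels off one coordinate by rescaling, which is a clean way to package the ``absorb the slack into two coordinates'' idea without special-casing.
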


\begin{proof}
Necessity of\re{l1norm} is nothing but the Cauchy-Schwarz inequality:
\beq
\sum_k|z_k|=\sum_k|\psi_k|\cdot|\phi_k|\leq\sqrt{\sum_k|\psi_k|^2}\cdot\sqrt{\sum_k|\phi_k|^2}\leq 1
\eeq
That\eq{l1norm} is also sufficient for the existence of such $\psi$ and $\phi$ will be shown by induction on $n$. Note that the phases of $z_k$ can be changed arbitrarily without altering the (non-)existence of such vectors, hence we may as well assume that all $z_k$ are non-negative real numbers. We now prove the statement for the initial case $n=2$. By the assumptions $z_1,z_2\geq 0$ and $z_1+z_2\leq 1$, it is implied that $|z_1-z_2|\leq 1$, and therefore it is possible to find angles $\alpha$ and $\beta$ such that
\begin{align*}
\cos(\alpha+\beta)=\cos\alpha\cos\beta-\sin\alpha\sin\beta\stackrel{!}{=}z_1-z_2,\\
\cos(\alpha-\beta)=\cos\alpha\cos\beta+\sin\alpha\sin\beta\stackrel{!}{=}z_1+z_2.
\end{align*}
Hence the two vectors
\beq
\psi=\left(\begin{array}{c}\cos\alpha\\\sin\alpha\end{array}\right),\qquad \phi=\left(\begin{array}{c}\cos\beta\\\sin\beta\end{array}\right)
\eeq
have all the required properties.

The induction step is a simple rescaling argument. Given $z_1,\ldots,z_{n+1}\geq 0$ with $\sum_k z_k\leq 1$, define $z'_1,\ldots,z'_n$ as
\beq
z'_k\equiv\frac{z_k}{1-z_{n+1}},\quad k=1,\ldots,n.
\eeq
(We may assume $z_{n+1}\neq 1$ e.g. by reordering the $z_k$'s.) Then by induction assumption, we can find $\psi',\phi'\in\C^n$ with $||\psi'||_2=||\phi'||_2=1$ and $\overline{\psi}'_k\phi'_k=z'_k$. Now the two vectors
\beq
\psi_k\equiv\left\{\begin{array}{cl}\psi'_k\sqrt{1-z_{n+1}}&\textrm{ for }k=1,\ldots,n\\\sqrt{z_{n+1}}&\textrm{ for }k=n+1\end{array}\right.,\qquad\phi_k\equiv\left\{\begin{array}{cl}\phi'_k\sqrt{1-z_{n+1}}&\textrm{ for }k=1,\ldots,n\\\sqrt{z_{n+1}}&\textrm{ for }k=n+1\end{array}\right., 
\eeq
do indeed have the desired properties $||\psi||_2=||\phi||_2=1$ and $\overline{\psi}_k\phi_k=z_k$, which also finishes the induction step.
\end{proof}

The following fact can also be regarded as a special case of the H\"older inequality, but since a direct proof is extremely simple, we have included it here.

\begin{lem}
\label{holder}
Let $x\in\R_{\geq 0}^n$ with $\sum_k x_k=1$. Then,
\beq
\left(\sum_k\sqrt{x_k}\right)^2\cdot\max_k x_k\geq 1
\eeq
\end{lem}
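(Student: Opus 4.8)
The plan is to reduce the whole statement to a single elementary termwise estimate. Write $M\equiv\max_j x_j$ and note first that $M>0$, since the $x_k$ are non-negative and sum to $1$. The key observation is that in the normalization condition we may split each summand as $x_k=\sqrt{x_k}\cdot\sqrt{x_k}$ and bound \emph{one} of the two factors by $\sqrt{M}$, because $\sqrt{x_k}\le\sqrt{M}$ for every $k$.

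Carrying this out gives
\[
1\;=\;\sum_k x_k\;=\;\sum_k\sqrt{x_k}\,\sqrt{x_k}\;\le\;\sqrt{M}\sum_k\sqrt{x_k}.
\]
Both sides are non-negative, so squaring preserves the inequality and produces $1\le M\left(\sum_k\sqrt{x_k}\right)^2$, which is precisely the assertion of the lemma.

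There is essentially no obstacle here. The only points deserving a moment's care are that $M\neq 0$, so the statement is non-vacuous, and that the indices with $x_k=0$ contribute harmlessly, since for them the pointwise bound $\sqrt{x_k}\sqrt{x_k}\le\sqrt{M}\sqrt{x_k}$ just reads $0\le 0$. As the sentence preceding the lemma already hints, the displayed inequality is exactly H\"older's inequality $\sum_k a_kb_k\le\|a\|_\infty\|b\|_1$ applied to $a_k=b_k=\sqrt{x_k}$; I would nonetheless spell out the termwise argument above, as it is the most transparent and self-contained route.
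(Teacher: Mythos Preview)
Your proof is correct and is essentially identical to the paper's own argument: the paper also writes $1=\sum_k\sqrt{x_k}\cdot\sqrt{x_k}\le\left(\sum_k\sqrt{x_k}\right)\max_k\sqrt{x_k}$ and then squares. The only cosmetic difference is that you introduce the notation $M$ and explicitly remark on the cases $M>0$ and $x_k=0$, which the paper leaves implicit.
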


\begin{proof}
This is easily shown by a direct calculation:
\beq
1=\sum_k \sqrt{x_k}\cdot\sqrt{x_k}\leq\left(\sum_k\sqrt{x_k}\right)\cdot\max_k\sqrt{x_k}
\eeq
so that squaring gives the desired result.
\end{proof}

\paragraph{Diversity indices.} A diversity index~\cite{Div}, as used for example in biostatistics, is a function that assigns to each probability distribution a real number which is intended to measure a sort of effective cardinality contained in the probability distribution. In other words, a diversity index is an exponentiated entropy. Like in the main text, let $P(\cdot)$ be a probability distribution on $\{1,\ldots,n\}$. For each $q\in(0,\infty)$, one obtains a diversity index $D_q$ by defining 
\beqn
\label{divindex}
D_q(P)\equiv\left(\sum_k P(k)^q\right)^{\frac{1}{1-q}}
\eeqn
For $q=1$, this has to be understood as $\lim_{q\ra 1}D_q$, which is the exponentiated Shannon entropy. In a similar way, it is possible to define $D_0(P)$, which is the cardinality of the support of $P$, and $D_\infty(P)$, which turns out to be
\beqn
\label{maxindex}
D_\infty(P)=\frac{1}{\max_k P(k)}.
\eeqn
The relevant quantities for us are going to be $D_{1/2}$ and $D_\infty$. When $P$ is the uniform distribution on $n$ elements, we have $D_q(P)=n$ for all $q$.

In this notation, we get a simple reformulation of lemma~\ref{holder}:
\beqn
\label{holderD}
D_\infty\leq D_{1/2}.
\eeqn

\begin{lem}
\label{strange}
\begin{enumerate}
\item
\beq
\frac{1}{D_\infty}+\left(\sqrt{D_{1/2}}-\frac{1}{\sqrt{D_\infty}}\right)^2\geq 1
\eeq
\item
\beq
\frac{2}{\sqrt{D_\infty}}-\sqrt{D_{1/2}}\leq 1
\eeq
\end{enumerate}
\end{lem}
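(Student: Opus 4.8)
The plan is to discard the diversity-index packaging and work directly with the quantities $a_k := \sqrt{P(k)}$. Then normalization of $P(\cdot)$ reads $\sum_k a_k^2 = 1$, while $1/\sqrt{D_\infty} = \max_k a_k$ and $\sqrt{D_{1/2}} = \sum_k a_k$ straight from the definitions\eq{divindex},\eq{maxindex}. After relabelling the outcomes I may assume $a_1 = \max_k a_k =: M$, and I abbreviate $s := \sum_k a_k$. Both asserted inequalities then reduce to elementary statements about the nonnegative reals $a_1,\ldots,a_n$, and I would handle the two parts separately.

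For part (a) the claim becomes $M^2 + (s-M)^2 \geq 1$. The step here is to expand the square to $2M^2 + s^2 - 2Ms$ and substitute $s^2 = \sum_k a_k^2 + 2\sum_{i<j}a_i a_j = 1 + 2\sum_{i<j}a_i a_j$ together with $Ms = M a_1 + M\sum_{k\geq 2}a_k = M^2 + M\sum_{k\geq 2}a_k$, so that the $M^2$ terms cancel and one is left with $1 + 2\sum_{i<j}a_i a_j - 2M\sum_{k\geq 2}a_k$. Splitting off the index $1$ in the double sum, $\sum_{i<j}a_i a_j = M\sum_{j\geq 2}a_j + \sum_{2\leq i<j}a_i a_j$, the middle terms cancel as well and the whole expression collapses to $1 + 2\sum_{2\leq i<j}a_i a_j$, which is $\geq 1$ since all $a_k\geq 0$. (As a byproduct this shows equality holds exactly when $P(\cdot)$ is supported on at most two outcomes, consistent with the $n=2$ computations around\eq{binary}.)

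For part (b) the claim is $2M - s \leq 1$, and this needs essentially no computation. Since $a_1 = M$ is one of the nonnegative summands making up $s$, we have $s \geq M$, hence $2M - s \leq M$; and $M = \max_k\sqrt{P(k)} \leq 1$ because each $P(k)\leq 1$. Combining gives $2M - s \leq M \leq 1$.

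The main (and only) obstacle is the bookkeeping in part (a): one has to expand the quadratic and keep track of which cross terms survive, the crucial point being that the ``diagonal'' contribution $M\cdot a_1 = M^2$ from the maximal outcome cancels against the explicit $M^2$ in $M^2 + (s-M)^2$, after which nonnegativity of the remaining cross terms finishes it. Part (b) is immediate. Note that part (a) is the inequality actually invoked (via $T/S + 1/S$) in the proof of proposition~\ref{AStrunc}, so it is the one worth stating carefully; part (b) supplies the first inequality of\eq{chain} in the regime handled there.
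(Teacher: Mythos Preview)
Your proof is correct and follows essentially the same route as the paper. For part~(a) the paper also sets $M=\sqrt{P(k_0)}$ and observes that $(s-M)^2=\bigl(\sum_{k\neq k_0}\sqrt{P(k)}\bigr)^2\geq\sum_{k\neq k_0}P(k)$, whence $M^2+(s-M)^2\geq\sum_k P(k)=1$; your expansion $M^2+(s-M)^2=1+2\sum_{2\leq i<j}a_ia_j$ is the same computation written out in full and additionally yields the equality case you note. For part~(b) the paper just invokes $D_\infty\geq 1$ and $D_{1/2}\geq 1$, which is marginally different from but equivalent to your $s\geq M$ and $M\leq 1$.
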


\begin{proof}
\begin{enumerate}
\item 
Let $k_0\in\{1,\ldots,n\}$ be such $P(k_0)$ is the highest probability in the distribution, i.e. $P(k_0)=\max_k P(k)$. Then,
\beq
\frac{1}{D_\infty}+\left(\sqrt{D_{1/2}}-\frac{1}{\sqrt{D_\infty}}\right)^2=P(k_0)+\left(\sum_{k\neq k_0}\sqrt{P(k)}\right)^2\geq P(n)+\sum_{k\neq k_0}P(k)=1,
\eeq
\item This is trivial by $D_\infty\geq 1$ and $D_{1/2}\geq 1$.
\end{enumerate}
\end{proof}

\end{document}